\newtheorem{theorem}{Theorem}
\newtheorem{corollary}{Corollary}
\newtheorem{lemma}{Lemma}
\newtheorem{proposition}{Proposition}
\theoremstyle{definition}
\newtheorem{definition}{Definition}
\newtheorem{remark}{Remark}
\newcommand{\F} {\ensuremath{\mathbb{F}}}
\title[Automorphisms of self-dual extremal codes]
      {The automorphism group of a self-dual $[72,36,16]$ code does not contain ${\mathcal S}_3$, ${\mathcal A}_4$ or
$D_8$}
\author[Martino Borello, Francesca Dalla Volta and Gabriele Nebe]{}
\subjclass{Primary: 94B05, 20B25.}
 \keywords{Extremal self-dual code, Automorphism group.}
\email{m.borello1@campus.unimib.it; francesca.dallavolta@unimib.it;
nebe@math.rwth-aachen.de}
\thanks{M. Borello and F. Dalla Volta are members of INdAM-GNSAGA,
Italy. F. Dalla Volta and G. Nebe were partially supported by
MIUR-Italy via PRIN ``Group theory and applications''}
\begin{document}
\maketitle

\centerline{\scshape Martino Borello }
\medskip
{\footnotesize
 \centerline{Dipartimento di Matematica e Applicazioni}
   \centerline{Universit\`{a} degli Studi di Milano Bicocca}
   \centerline{20125 Milan, Italy}
}

\medskip

\centerline{\scshape Francesca Dalla Volta}
\medskip
{\footnotesize
 \centerline{Dipartimento di Matematica e Applicazioni}
   \centerline{Universit\`{a} degli Studi di Milano Bicocca}
   \centerline{20125 Milan, Italy}
}

\medskip
 \centerline{\scshape Gabriele Nebe}
\medskip
{\footnotesize
 \centerline{ Lehrstuhl D f\"ur Mathematik}
   \centerline{RWTH Aachen University}
   \centerline{52056 Aachen, Germany}
}

\medskip

\centerline{(Communicated by Aim Sciences)}
\medskip

\begin{abstract}
A computer calculation with {\sc Magma} shows that there is no
extremal self-dual binary code $\mathcal{C}$ of length $72$ whose
automorphism group contains the symmetric group of degree $3$, the
alternating group of degree $4$ or the dihedral group of order $8$.
Combining this with the known results in the literature one obtains
that $\textnormal{Aut}(\mathcal{C})$ has order at most $5$ or is
isomorphic to the elementary abelian group of order $8$.
\end{abstract}

\section{Introduction}

Let $\mathcal{C}=\mathcal{C}^{\perp}\leq \F_2^n$ be a binary
\textit{self-dual} code of length $n$. Then the weight
$\textnormal{wt}(c):=|\{i \ | \ c_i=1\}|$ of every $c\in\mathcal{C}$
is even. When in particular
$\textnormal{wt}(\mathcal{C}):=\{\textnormal{wt}(c) \ | \
c\in\mathcal{C}\}\subseteq 4 \mathbb{Z}$, the code is called
\textit{doubly-even}. Using invariant theory, one may show
\cite{MSmindis} that the minimum weight
$d(\mathcal{C}):=\min(\textnormal{wt}(\mathcal{C}\setminus \{0\}))$
of a doubly-even self-dual code is at most $4+4\left\lfloor
\frac{n}{24} \right\rfloor$. Self-dual codes achieving this bound
are called \textit{extremal}. Extremal self-dual codes of length a
multiple of $24$ are particularly interesting for various reasons:
for example they are always doubly-even \cite{Rshad} and all their
codewords of a given nontrivial weight support $5$-designs
\cite{AMdes}. There are unique extremal self-dual codes of length
$24$ (the extended binary Golay code $\mathcal{G}_{24}$) and $48$
(the extended quadratic residue code $QR_{48}$) and both have a
fairly big automorphism group (namely
$\textnormal{Aut}(\mathcal{G}_{24})\cong M_{24}$ and
$\textnormal{Aut}(QR_{48})\cong \text{PSL}_2(47)$). The existence of
an extremal code of length $72$ is a long-standing open problem
\cite{S}. A series of papers investigates the automorphism group of
a putative extremal self-dual code of length $72$ excluding most of
the subgroups of $\mathcal{S}_{72}$. The most recent result is
contained in \cite{Baut6} where the first author excluded the
existence of automorphisms of order $6$.\\
In this paper we prove that neither $\mathcal{S} _3$ nor ${\mathcal
A}_4$ nor $D_8$ is contained in the automorphism group of such a
code.\\ The method to exclude ${\mathcal S}_3$ (which is isomorphic
to the dihedral group of order $6$)  is similar to that used for the
dihedral group of order $10$ in \cite{FeulNe} and based on the
classification of additive trace-Hermitian self-dual codes in $\F
_4^{12}$ obtained in \cite{DPadd}.\\ For the alternating group
${\mathcal A}_4$ of degree $4$ and the dihedral group $D_8$ of order
$8$, we use their structure as a semidirect product of an elementary
abelian group of order $4$ and a group of order $3$ and $2$
respectively. By \cite{Neven} we know that the fixed code of any
element of order $2$ is isomorphic to a
 self-dual binary code $D$ of length $36$ with minimum distance $8$.
These codes have been classified in \cite{Gaborit}; up to
equivalence there are $41$ such codes $D$. For all possible lifts
$\tilde{D} \leq \F_2^{72}$ that respect the given actions we compute
the codes ${\mathcal E} := \tilde{D}^{{\mathcal A}_4}$ and
${\mathcal E} := \tilde{D}^{D_8}$ respectively. We have respectively
only three and four such codes ${\mathcal E}$ with minimum distance
$\geq 16$. Running through all doubly-even ${\mathcal
A}_4$-invariant self-dual overcodes of ${\mathcal E}$ we see that no
such code is extremal. Since the group $D_8$ contains a cyclic group
of order $4$, say $C_4$, we use the fact \cite{Neven} that
$\mathcal{C}$ is a free $\F_2C_4$-module. Checking all doubly-even
self-dual overcodes of ${\mathcal E}$ which are free
$\F_2C_4$-modules we see that, also in this case, none is extremal.\\
The present state of research is summarized in the following
theorem.
\begin{theorem}
The automorphism group of a self-dual $[72,36,16]$ code is either
cyclic of order $1,2,3,4,5$ or elementary abelian of order $4$ or
$8$.
\end{theorem}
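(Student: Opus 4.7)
The plan is to combine the new exclusions from this paper ($\mathcal{S}_3$, $\mathcal{A}_4$, $D_8 \not\le G$ where $G := \textnormal{Aut}(\mathcal{C})$) with the restrictions on $G$ already established in the literature, and conclude by a short case analysis on the isomorphism type of $G$. The key facts I would collect from the cited papers are: every element of $G$ has order in $\{1,2,3,4,5\}$, the Sylow $p$-subgroup is cyclic for $p\in\{3,5\}$, the $2$-Sylow has exponent dividing $4$ and order at most $8$, and by \cite{Neven} the code $\mathcal{C}$ is a free $\F_2 C_4$-module for every cyclic subgroup $C_4 \le G$. Together these force $|G|=2^a\cdot 3^b\cdot 5^c$ with $a\le 3$ and $b,c\le 1$.

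The second step is to rule out $|G|$ being divisible by two distinct primes. If $2\cdot 3$ divides $|G|$, pick $x,y\in G$ of orders $2$ and $3$; since $|xy|\in\{2,3,4,5\}$ by the element-order bound, the subgroup $\langle x,y\rangle$ is a nontrivial quotient of the spherical $(2,3,|xy|)$-triangle group, hence isomorphic to $\mathcal{S}_3$, $\mathcal{A}_4$, $\mathcal{S}_4$, or $\mathcal{A}_5$; each contains $\mathcal{S}_3$ or $\mathcal{A}_4$ and is therefore excluded. The pair $(2,5)$ is handled by an analogous analysis producing a subgroup $D_{10}$ forbidden by \cite{FeulNe}, or the larger forbidden subgroup $\mathcal{A}_5$. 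The pair $(3,5)$ can only give $C_{15}$ (forbidden by the element-order bound) or a larger group containing $\mathcal{A}_5$. Consequently $|G|$ is a prime power.

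Finally one enumerates prime-power possibilities. If $|G|$ is a power of $3$ or $5$, the element-order bound forces $G\cong C_1$, $C_3$, or $C_5$. If $|G|$ is a $2$-power then $|G|\le 8$; of the five groups of order $8$, $C_8$ is excluded by the exponent bound, $D_8$ by this paper, and $C_4\times C_2$ and $Q_8$ by applying the $\F_2 C_4$-module condition of \cite{Neven} to the embedded $C_4$ together with the extra involution, exactly as in the $D_8$ analysis carried out in the body of the paper. The surviving isomorphism types are $C_1$, $C_2$, $C_3$, $C_4$, $C_5$, $C_2\times C_2$, and $C_2^3$, as claimed. The main obstacle is this last step: verifying that the conditions of \cite{Neven} genuinely forbid $C_4\times C_2$ and $Q_8$ while permitting $C_2^3$; the triangle-group reduction in the preceding step is essentially combinatorial once the element-order restriction is in place.
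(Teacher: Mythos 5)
Your skeleton (collect the known order and subgroup exclusions, eliminate mixed prime divisors, then enumerate prime-power groups) is more explicit than what the paper does --- the paper obtains the theorem simply by striking $\mathcal{S}_3$, $\mathcal{A}_4$ and $D_8$ from the list of admissible automorphism groups already established in the earlier literature --- but as written your mixed-prime step has a genuine gap for the pairs involving $5$. The triangle-group reduction is only available when the triple is spherical: for $x$ of order $2$ and $y$ of order $5$ the groups $\Delta(2,5,4)$ and $\Delta(2,5,5)$ are infinite hyperbolic triangle groups, so from $|xy|\in\{4,5\}$ you can conclude nothing about $\langle x,y\rangle$; likewise for the pair $(3,5)$ only $|xy|=2$ is spherical. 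Moreover the dichotomies you assert are false as bare group-theoretic statements: the Frobenius group $C_2^4:C_5$ has all element orders in $\{1,2,5\}$ and contains neither $C_{10}$, nor $D_{10}$, nor $\mathcal{A}_5$, and $(C_5\times C_5):C_3$ has element orders $\{1,3,5\}$ with no $C_{15}$ and no $\mathcal{A}_5$. So any correct argument must genuinely invoke the quantitative restrictions (Sylow $2$-subgroup of order at most $8$, Sylow $5$ cyclic of order $5$), for instance: an involution normalizing $P_5$ creates $C_{10}$ or $D_{10}$, an element of order $4$ normalizing $P_5$ does so via its square, hence $N_G(P_5)=P_5$; Burnside's normal $p$-complement theorem then gives a normal Sylow $2$-subgroup $K$ with $|K|\le 8$, and since $5$ does not divide $|\textnormal{Aut}(K)|$ one gets a forbidden $C_{10}$. (Alternatively, enumerate the $\{2,5\}$-groups of order at most $40$.) This argument, or something equivalent, is missing; it is not ``analogous'' to the $(2,3)$ case, which is indeed fine.

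Second, the load-bearing inputs are attributed too loosely. ``All element orders lie in $\{1,2,3,4,5\}$'', ``the Sylow $5$-subgroup is cyclic'' and ``the Sylow $2$-subgroup has exponent dividing $4$'' require, beyond \cite{Baut6}, \cite{FeulNe} and \cite{Neven}, the exclusion of automorphisms of order $8$, $9$ and $25$ and of $C_5\times C_5$; none of these is proved in this paper and not all of them appear in its reference list, so they must be cited explicitly from the earlier literature. (By contrast, the bound $|S_2|\le 8$ is cheap: by \cite{Bord2} involutions are fixed point free, hence any $2$-subgroup acts semiregularly on the $72$ coordinates, and $16$ does not divide $72$.) Finally, the point you flag as the ``main obstacle'' is no obstacle at all: the result of \cite{Neven} is precisely that $Z_2\times Z_4$ and $Q_8$ (and $Z_{10}$) do not embed into the automorphism group, so no new $\F_2 C_4$-module computation in the style of the $D_8$ section is needed there --- nor could a proof of this theorem legitimately defer to one.
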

All results are obtained using extensive computations in {\sc Magma}
\cite{Magma}.

\section{The symmetric group of degree 3.}
\subsection{Preliminaries}\label{prel}
Let $\mathcal{C}$ be a binary self-dual code and let $g$ be an
automorphism of $\mathcal{C}$ of odd prime order $p$. Define
$\mathcal{C}(g):=\{c\in\mathcal{C} \ | \ c^g=c \}$ and
$\mathcal{E}(g)$ the set of all the codewords that have even weight
on the cycles of $g$. From a module theoretical point of view,
$\mathcal{C}$ is a $\F_2\langle g\rangle$-module and
$\mathcal{C}(g)=\mathcal{C} \cdot (1+g+\ldots+g^{p-1})$ and
$\mathcal{E}(g)=\mathcal{C} \cdot
(g+\ldots+g^{p-1})$. \\
In \cite{Hodd} Huffman notes (it is a special case of Maschke's
theorem) that
$$\mathcal{C} = \mathcal{C}(g)\oplus \mathcal{E}(g).$$
In particular it is easy to prove that the dimension of
$\mathcal{E}(g)$ is $\frac{(p-1)\cdot c}{2}$ where $c$ is the number
of cycles of $g$. In the usual manner we can identify vectors of
length $p$ with polynomials in $\mathcal{Q}:=\F_2[x]/(x^p-1)$; that
is $(v_1,v_2,\ldots,v_p)$ corresponds to $v_1+v_2x+\ldots+v_p
x^{p-1}$. The weight of a polynomial is the number of nonzero
coefficients. Let $\mathcal{P}\subset \mathcal{Q}$ be the set of all
even weight polynomials. If $1+x+\ldots+x^{p-1}$ is irreducible in
$\F_2[x]$ then $\mathcal{P}$ is a field with identity
$x+x^2+\ldots+x^{p-1}$ \cite{Hodd}. There is a natural map that we
will describe only in our particular case in the next section, from
$\mathcal{E}(g)$ to $\mathcal{P}^c$. Let us observe here only the
fact that, if $p=3$, then $1+x+x^2$ is irreducible in $\F_2[x]$ and
$\mathcal{P}$ is isomorphic to $\F_4$, the field with four elements.
The identification is the following:

{

\center{
\begin{tabular}{||c|c||c|c||}
  \hline
  % after \\: \hline or \cline{col1-col2} \cline{col3-col4} ...
  0 & 000 & $\omega$ & 110 \\
  \hline
  1 & 011 & $\overline{\omega}$ & 101 \\
  \hline
\end{tabular}

} }

\subsection{The computations for ${\mathcal S}_3$}
Let $\mathcal{C}$ be an extremal self-dual code of length $72$ and
suppose that $G \leq \textnormal{Aut}(\mathcal{C})$ with $G\cong
\mathcal{S}_3$. Let $\sigma $ denote an element of order $2$ and $g$
an element of order $3$ in $G$. By \cite{Bord2} and \cite{Bord3},
$\sigma $ and $g$ have no fixed points. So, in particular, $\sigma $
has $36$ $2$-cycles and $g$ has $24$ $3$-cycles. Let us suppose,
w.l.o.g. that
$$\sigma =(1,4)(2,6)(3,5)\ldots(67,70)(68,72)(69,71)$$
and
$$g=(1,2,3)(4,5,6)\ldots(67,68,69)(70,71,72).$$
As we have seen in Section \ref{prel},
$$\mathcal{C}=\mathcal{C}(g)\oplus \mathcal{E}(g)$$
where $\mathcal{E}(g)$ is the subcode of $\mathcal{C}$ of all the
codewords with an even weight on the cycles of $g$, of dimension
$24$. We can consider a map
$$f:\mathcal{E}(g)\rightarrow \F_4^{24}$$
extending the identification $\mathcal{P}\cong \F_4$, stated in
Section \ref{prel}, to each cycle of $g$. \\
Again by \cite{Hodd},  $\mathcal{E}(g)':=f(\mathcal{E}(g))$ is an
Hermitian self-dual code over $\F_4$ (that is
$\mathcal{E}(g)'=\left\{\epsilon\in\F_4^{24} \ \left| \
\sum_{i=0}^{24} \epsilon_i \overline{\gamma_i} =0 \ \text{for all} \
\gamma \in \mathcal{E}(g)' \right.\right\}$, where
$\overline{\alpha}=\alpha ^2$ is the conjugate of $\alpha$ in
$\F_4$). Clearly the minimum distance of $\mathcal{E}(g)'$ is
$\geq8$. So $\mathcal{E}(g)'$ is a $[24,12,\geq 8]_4$ Hermitian
self-dual code. \\
The action of $\sigma $ on $\mathcal{C}\leq \F_2^{72}$ induces an
action on $\mathcal{E}(g)'\leq \F_4^{24}$, namely
$$(\epsilon_1,\epsilon_2,\ldots,\epsilon_{23},\epsilon_{24})^{\sigma }=
(\overline{\epsilon_2},\overline{\epsilon_1},\ldots,\overline{\epsilon_{24}},
\overline{\epsilon_{23}})$$ Note that this action is only
$\F_2$-linear. In particular, the subcode fixed by $\sigma $, say
$\mathcal{E}(g)'(\sigma )$, is
$$\mathcal{E}(g)'(\sigma )=\{(\epsilon_1,\overline{\epsilon_1},\ldots,\epsilon_{12},\overline{\epsilon_{12}})\in \mathcal{E}(g)'\}$$

\begin{proposition} \textnormal{(cf. \cite[Cor. 5.6]{FeulNe})}
The code
$$\mathcal{X}:=\pi(\mathcal{E}(g)'(\sigma )):=\{(\epsilon_1,\ldots,\epsilon_{12}) \in \F_4^{12}
\ | \
(\epsilon_1,\overline{\epsilon_1},\ldots,\epsilon_{12},\overline{\epsilon_{12}})\in
\mathcal{E}(g)'\}$$ is an additive trace-Hermitian self-dual
$(12,2^{12},\geq 4)_4$ code such that
$$\mathcal{E}(g)':=\phi(\mathcal{X}):=\langle(\epsilon_1,\overline{\epsilon_1},
\ldots,\epsilon_{12},\overline{\epsilon_{12}}) \ | \ (\epsilon_1,
\ldots,\epsilon_{12})\in\mathcal{X}\rangle_{\F_4}. $$
\end{proposition}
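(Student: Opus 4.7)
The plan is to verify each asserted property of $\mathcal{X}$ in turn: additivity, cardinality $2^{12}$, trace-Hermitian self-duality, minimum distance $\geq 4$, and the spanning identity $\phi(\mathcal{X})=\mathcal{E}(g)'$. The underlying observation is that a vector $u\in\mathcal{E}(g)'\subseteq \F_4^{24}$ is $\sigma$-fixed if and only if $u_{2i}=\overline{u_{2i-1}}$ for every $i$, so the projection $\pi$ restricts to an $\F_2$-linear bijection $\mathcal{E}(g)'(\sigma)\to\mathcal{X}$. Consequently $\mathcal{X}$ is additive, and establishing $|\mathcal{X}|=2^{12}$ reduces to proving $\dim_{\F_2}\mathcal{E}(g)'(\sigma)=12$.

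For that dimension count I would use the $\F_2\langle g\rangle$-decomposition $\mathcal{C}=\mathcal{C}(g)\oplus\mathcal{E}(g)$, which is preserved by $\sigma$ since $\sigma$ normalises $\langle g\rangle$ in $\mathcal{S}_3$. Intersecting with $\mathcal{C}(\sigma)$ yields $\mathcal{C}(\sigma)=\mathcal{C}(g)(\sigma)\oplus\mathcal{E}(g)(\sigma)$. By \cite{Neven}, $\mathcal{C}(\sigma)$ has $\F_2$-dimension $18$. Identifying $\mathcal{C}(g)$ with its natural image in $\F_2^{24}$, one checks it is a self-dual $[24,12]$ code on which $\sigma$ induces a fixed-point-free permutation of the $24$ cycles of $g$ (no cycle can be $\sigma$-stable, because an involution with no fixed points cannot act as a $3$-cycle), so consists of $12$ transpositions; a second application of \cite{Neven} gives $\dim_{\F_2}\mathcal{C}(g)(\sigma)=6$. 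Subtracting, $\dim_{\F_2}\mathcal{E}(g)(\sigma)=12$, and $\F_2$-linearity and $\sigma$-equivariance of $f$ transfer this to $\mathcal{E}(g)'(\sigma)$.

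For trace-Hermitian self-orthogonality I would use the identity
$$\sum_{i=1}^{12}\bigl(\epsilon_i\overline{\eta_i}+\overline{\epsilon_i}\eta_i\bigr)=\sum_{i=1}^{12}\mathrm{tr}(\epsilon_i\overline{\eta_i})$$
for $(\epsilon_1,\ldots,\epsilon_{12}),(\eta_1,\ldots,\eta_{12})\in\mathcal{X}$: the left-hand side is the Hermitian inner product of the two lifts inside $\mathcal{E}(g)'$, which vanishes by Hermitian self-duality of $\mathcal{E}(g)'$. Combined with $|\mathcal{X}|=2^{12}$ and the non-degeneracy of the trace-Hermitian form on $\F_4^{12}$ (an $\F_2$-space of dimension $24$), this promotes self-orthogonality to self-duality. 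The minimum-distance lower bound is immediate: a nonzero $\epsilon\in\mathcal{X}$ of Hamming weight $w$ lifts to a codeword of $\mathcal{E}(g)'$ of weight $2w\geq 8$.

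Finally, for $\phi(\mathcal{X})=\mathcal{E}(g)'$, only the inclusion $\supseteq$ is nontrivial. For any $u\in\mathcal{E}(g)'$ both $v_1:=u+u^\sigma$ and $v_\omega:=\omega u+(\omega u)^\sigma=\omega u+\overline{\omega}u^\sigma$ are $\sigma$-fixed (using that $\sigma$ is $\F_4$-semilinear), and the combination $\overline{\omega}v_1+v_\omega=(\omega+\overline{\omega})u=u$ realises $u$ as an $\F_4$-combination of vectors of $\mathcal{E}(g)'(\sigma)$. I expect the main obstacle to be the dimension count: once one grants that $\mathcal{C}(g)$ gives rise to a self-dual length-$24$ code and that the induced $\sigma$-action on its cycles is fixed-point-free, the rest is bookkeeping, but both of those facts deserve explicit verification before either application of \cite{Neven} is invoked.
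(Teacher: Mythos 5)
Most of your argument is sound and close in spirit to the paper's: the identity $\sum_i(\epsilon_i\overline{\eta_i}+\overline{\epsilon_i}\eta_i)=\sum_i\mathrm{tr}(\epsilon_i\overline{\eta_i})$ for the inner product of the lifts, the promotion of self-orthogonality to self-duality by a dimension count, and the explicit descent computation for $\phi(\mathcal{X})=\mathcal{E}(g)'$ are all fine. The genuine gap is in your dimension count, namely the claim that ``a second application of \cite{Neven} gives $\dim_{\F_2}\mathcal{C}(g)(\sigma)=6$.'' The result of \cite{Neven} you invoke --- that the fixed code of a fixed point free involution projects onto a self-dual code, equivalently that the code is a free $\F_2\langle\sigma\rangle$-module --- is a statement about the extremal $[72,36,16]$ code, and its proof uses extremality; it is not a general fact about self-dual codes admitting a fixed point free automorphism of order $2$. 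In general one only has $\pi(D(\tau))^{\perp}=\pi(D\cdot(1+\tau))\subseteq \pi(D(\tau))$, and equality can fail: the self-dual code $\{00,11\}^{\oplus 12}\leq \F_2^{24}$ with $\tau$ swapping the two coordinates of each block has fixed code of dimension $12$, not $6$; even doubly-even examples exist (the $[8,4,4]$ extended Hamming code with a fixed point free translation has fixed code of dimension $3$, not $2$). So applying \cite{Neven} to the length-$24$ code $\pi_1(\mathcal{C}(g))$ is not justified, and the value $6$ (which is true here) is left unproved.

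The paper gets $\dim_{\F_2}\mathcal{E}(g)'(\sigma)=12$ in one line: $\mathcal{E}(g)'$ is a projective, hence free, $\F_2\langle\sigma\rangle$-module, so its fixed space has half its $\F_2$-dimension. You can justify this either by observing that $\mathcal{E}(g)$ is a $\sigma$-invariant direct summand of $\mathcal{C}$, which is free over $\F_2\langle\sigma\rangle$ by the one legitimate application of \cite{Neven} (to the extremal code itself), and direct summands of projective modules are projective; or by Galois descent, since $\mathcal{E}(g)'$ is an $\F_4$-subspace of $\F_4^{24}$ stable under the $\F_4$-semilinear involution $\sigma$. In fact your own final paragraph already contains the needed descent: the identity $u=\overline{\omega}\,(u+u^{\sigma})+\bigl(\omega u+(\omega u)^{\sigma}\bigr)$ shows that $\mathcal{E}(g)'(\sigma)$ spans $\mathcal{E}(g)'$ over $\F_4$, whence $\dim_{\F_2}\mathcal{E}(g)'(\sigma)\geq \dim_{\F_4}\mathcal{E}(g)'=12$, while your self-orthogonality computation gives $\dim_{\F_2}\mathcal{X}\leq 12$; combining these replaces the faulty step and makes the detour through $\mathcal{C}(g)$ unnecessary.
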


\begin{proof}
For $\gamma , \epsilon \in \mathcal{X} $  the inner product of their
preimages in $\mathcal{E}(g)'(\sigma )$ is
$$ \sum_{i=1}^{12}
(\epsilon_i\overline{\gamma_i}+\overline{\epsilon_i}\gamma_i)  $$
which is $0$ since $\mathcal{E}(g)'(\sigma )$ is self-orthogonal.
Therefore $\mathcal{X}$ is trace-Hermitian self-orthogonal. Thus
$$\dim_{\F_2}(\mathcal{X}) = \dim _{\F_2} (
\mathcal{E}(g)'(\sigma ) ) = \frac{1}{2} \dim _{\F_2}
(\mathcal{E}(g)' ) $$ since $\mathcal{E} (g)'$ is a projective
$\F_2\langle\sigma \rangle$-module, and so $\mathcal{X}$ is
self-dual. Since $\dim_{\F_2}(\mathcal{X})=12=\dim_{\F_{4}}
(\mathcal{E}(g)')$, the $\F_4$-linear code $\mathcal{E}(g)'\leq
\F_4^{24}$ is obtained from $\mathcal{X}$ as stated.
\end{proof}

 All additive trace-Hermitian self-dual
codes in $\F_4^{12}$ are classified in \cite{DPadd}. There are
$195,520$ such codes that have minimum distance $\geq 4$ up to
monomial equivalence.

\begin{remark}
If $\mathcal{X}$ and $\mathcal{Y}$ are monomial equivalent, via a
$12\times 12$ monomial matrix $M:=(m_{i,j})$, then
$\phi(\mathcal{X})$ and $\phi(\mathcal{Y})$ are monomial equivalent
too, via the $24\times 24$ monomial matrix $M':=(m'_{i,j})$, where
$m'_{2i-1,2j-1}=m_{i,j}$ and $m'_{2i,2j}=\overline{m_{i,j}}$, for
all $i,j\in\{1,\ldots,12\}$.
\end{remark}

An exhaustive search with {\sc Magma} (of about $7$ minutes CPU on
an Intel(R) Xeon(R) CPU X5460 @ 3.16GHz) shows that the minimum
distance of $\phi(\mathcal{X})$ is $\leq 6$, for each of the
$195,520$ additive trace-Hermitian self-dual $(12,2^{12},\geq 4)_4$
codes. But $\mathcal{E}(g)'$ should have minimum distance $\geq 8$,
a contradiction. So we proved the following.

\begin{theorem}
The automorphism group of a self-dual $[72,36,16]$ code does not
contain a subgroup isomorphic to   $\mathcal{S}_3$.
\end{theorem}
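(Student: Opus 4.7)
Suppose for contradiction that $\mathcal{C}$ is an extremal self-dual $[72,36,16]$ code with $G \leq \textnormal{Aut}(\mathcal{C})$ isomorphic to $\mathcal{S}_3$, and adopt the notation of the preceding subsection: $\sigma, g$ are the elements of order $2$ and $3$ in $G$, both fixed-point-free, $\mathcal{E}(g) \subset \mathcal{C}$ is the subcode that is even on each $3$-cycle of $g$, and $f : \mathcal{E}(g) \to \F_4^{24}$ is the coordinate map described in Section~\ref{prel}. The map $f$ contracts each $3$-cycle to a single $\F_4$-coordinate, sending the all-zero pattern to $0$ and each weight-$2$ pattern to a nonzero element, so the $\F_4$-Hamming weight of $f(c)$ equals half the binary weight of $c$. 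Since $d(\mathcal{C}) = 16$, this forces $d(\mathcal{E}(g)') \geq 8$; by the Proposition, $\mathcal{E}(g)' = \phi(\mathcal{X})$ for a unique additive trace-Hermitian self-dual $(12,2^{12})_4$ code $\mathcal{X}$, and this $\mathcal{X}$ automatically has minimum distance $\geq 4$.

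The plan is then an exhaustive search against the classification of additive trace-Hermitian self-dual $(12, 2^{12}, \geq 4)_4$ codes from \cite{DPadd}, which contains exactly $195{,}520$ monomial-equivalence classes. By the Remark, monomial-equivalent $\mathcal{X}$ produce monomial-equivalent $\phi(\mathcal{X})$, and hence share the same minimum weight, so one representative per equivalence class suffices. For each representative I would build $\phi(\mathcal{X})$ using the explicit formula $\phi(\mathcal{X}) = \langle (\epsilon_1, \overline{\epsilon_1}, \ldots, \epsilon_{12}, \overline{\epsilon_{12}}) \mid (\epsilon_1, \ldots, \epsilon_{12}) \in \mathcal{X} \rangle_{\F_4}$ from the Proposition, compute the minimum Hamming weight of $\phi(\mathcal{X})$ over $\F_4$ in {\sc Magma}, and test whether it can attain $8$.

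The assertion to verify is that in every one of the $195{,}520$ cases the minimum distance of $\phi(\mathcal{X})$ is at most $6$; any bound strictly below $8$ is incompatible with $\mathcal{E}(g)' = \phi(\mathcal{X})$ having minimum distance $\geq 8$, so a uniform ``$\leq 6$'' across the list rules out $\mathcal{S}_3 \leq \textnormal{Aut}(\mathcal{C})$. The only real obstacle is the computational scale—roughly $2\times 10^5$ minimum-distance computations on $[24,12]$ codes over $\F_4$—which is precisely why the reduction from length-$72$ binary codes with a prescribed $\mathcal{S}_3$-action to length-$12$ additive codes over $\F_4$ is carried out first: after that reduction, standard {\sc Magma} routines handle the search in a few minutes of CPU time, and a single uniform answer across the whole list yields the contradiction.
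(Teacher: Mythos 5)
Your proposal follows essentially the same route as the paper: reduce via the fixed code of $\sigma$ acting on the Hermitian self-dual image $\mathcal{E}(g)'=\phi(\mathcal{X})$ to an additive trace-Hermitian self-dual $(12,2^{12},\geq 4)_4$ code $\mathcal{X}$, then run an exhaustive {\sc Magma} check over the $195{,}520$ classified representatives of \cite{DPadd}, using the Remark that monomial equivalence of $\mathcal{X}$ propagates to $\phi(\mathcal{X})$. The paper reports exactly this computation (about $7$ minutes CPU), finding $d(\phi(\mathcal{X}))\leq 6$ in every case, which gives the contradiction you describe.
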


\section{The alternating group of degree 4 and the dihedral group of order 8.}

\subsection{The action of the Klein four group.}

For the alternating group ${\mathcal A}_4$ of degree 4 and the
dihedral group $D_8$ of order 8 we use their structure
$$\begin{array}{ccc}
{\mathcal A}_4 \cong & {\mathcal V}_4 : C_3 \cong  & (C_2\times C_2
):C_3 =\langle g,h \rangle : \langle \sigma \rangle
\\
D_8 \cong & {\mathcal V}_4 : C_2 \cong  & (C_2\times C_2 ):C_2
=\langle g,h \rangle : \langle \sigma \rangle
\end{array}
$$
as a semidirect product.

Let ${\mathcal C}$ be some extremal $[72,36,16]$ code such that
${\mathcal H} \leq \textnormal{Aut}({\mathcal C})$ where ${\mathcal
H}\cong \mathcal{A}_4$ or ${\mathcal H}\cong D_8$. Then by
\cite{Bord2} and \cite{Bord3} all non trivial elements in ${\mathcal
H}$ act without fixed points and we may replace ${\mathcal C}$ by
some equivalent code so that
$$\begin{array}{llr}
g = & (1, 2)(3, 4)(5, 6)(7, 8)(9, 10)(11, 12) \ldots (71, 72) &  \\
h = & (1, 3)(2, 4)(5, 7)(6, 8)(9, 11)(10, 12) \ldots (70,72)  &  \\
\sigma = & (1, 5, 9)(2, 7, 12)(3, 8, 10)(4, 6, 11) \ldots (64, 66, 71) & (for \ {\mathcal A}_4) \\
\sigma = & (1, 5)(2, 8)(3, 7)(4, 6) \ldots (68, 70)  & (for \ D_8) \\
\end{array}
$$

Let
$${\mathcal G}:= C_{{\mathcal S}_{72}}({\mathcal H}) := \{ t\in {\mathcal
S}_{72} \mid tg=gt, th=ht, t\sigma = \sigma t \} $$ denote the
centralizer of this subgroup ${\mathcal H}$ in ${\mathcal S}_{72}$.
Then ${\mathcal G}$ acts on the set of extremal ${\mathcal
H}$-invariant self-dual codes and we aim to find a system of orbit
representatives for this action.

\begin{definition}
Let
$$\begin{array}{l} \pi _1:  \{ v\in \F _2^{72} \mid v^g = v \} \to \F_2^{36}  \\
(v_1,v_1,v_2,v_2, \ldots,v_{36},v_{36} ) \mapsto
 (v_1,v_2,\ldots , v_{36} ) \end{array}  $$
denote the bijection between the fixed space of $g$ and $\F_2^{36}$
and
$$\begin{array}{l}
\pi _2: \{ v \in \F_2^{72} \mid v^g = v \mbox{ and } v^h = v \} \to \F_2^{18} \\
(v_1,v_1,v_1,v_1,v_2,\ldots , v_{18}) \mapsto (v_1,v_2,\ldots ,
v_{18}) \end{array} $$ the bijection between the fixed space of
$\langle g,h \rangle \triangleleft {\mathcal A}_4$ and $\F_2^{18} $.
Then $h$ acts on the image of $\F_2^{18}$ as
$$ (1,2)(3,4)\ldots  (35,36) . $$
Let
$$\begin{array}{l} \pi _3 : \{ v\in \F _2^{36} \mid v^{\pi_1(h)} = v \} \to \F_2^{18},  \\
(v_1,v_1,v_2,v_2, \ldots , v_{18},v_{18} ) \mapsto
 (v_1,v_2,\ldots , v_{18} ) , \end{array} $$ so that $\pi _2 = \pi _3 \circ \pi _1 $.
\end{definition}

\begin{remark}
The centraliser $C_{\mathcal{S}_{72}}(g) \cong C_2 \wr
\mathcal{S}_{36} $ of $g$ acts on the set of  fixed points of $g$.
Using the isomorphism $\pi _1$ we obtain a group epimorphism which
we again denote by $\pi _1$
$$\pi _1 : C_{\mathcal{S}_{72}}(g) \to \mathcal{S}_{36} $$
with kernel $C_2^{36}$. Similarly we obtain the epimorphism
$$\pi_3:C_{\mathcal{S}_{36}}(\pi_1(h))\rightarrow
\mathcal{S}_{18} .$$ The normalizer $N_{\mathcal{S}_{72}}(\langle
g,h\rangle) $ acts on the set of $\langle g,h\rangle$-orbits which
defines a homomorphism
$$\pi_2:N_{\mathcal{S}_{72}}(\langle g,h\rangle)\rightarrow \mathcal{S}_{18}.$$
\end{remark}

Let us consider the fixed code $ {\mathcal C}(g) $ which is
isomorphic to
 $$\pi_1({\mathcal C}(g))  = \{ (c_1,c_2,\ldots , c_{36} ) \mid
(c_1,c_1,c_2,c_2, \ldots c_{36},c_{36} ) \in {\mathcal C} \} .$$ By
\cite{Neven}, the code $\pi_1({\mathcal C}(g))  $ is some self-dual
code of length $36$ and minimum distance $8$. These codes have been
classified in \cite{Gaborit}; up to equivalence
 (under the action of
the full symmetric group $\mathcal{S}_{36}$) there are $41$ such
codes. Let $$Y_1,\ldots , Y_{41}$$ be a system of representatives of
these extremal self-dual codes of length $36$.

\begin{remark}
$\mathcal{C}(g)\in\mathcal{D} $ where
$${\mathcal D} := \left\{D \leq \F_2^{36}  \left|  \begin{array}{c} D=D^\perp, d(D)=8,
\pi_1(h) \in \textnormal{Aut}(D) \\ \mbox{ and } \pi_2(\sigma ) \in
\textnormal{Aut} (\pi_3(D(\pi_1(h)))) \end{array} \right.\right\}
.$$ For $1\leq k\leq 41$ let ${\mathcal D}_k := \{ D\in {\mathcal D}
\mid D \cong Y_k \}$.
\end{remark}

Let ${\mathcal G}_{36} := \{ \tau \in C_{{\mathcal S}_{36}} (\pi
_1(h) ) \mid \pi_3 (\tau ) \pi _2(\sigma ) = \pi_2(\sigma ) \pi
_3(\tau ) \} $.

\begin{remark}
For ${\mathcal H} \cong {\mathcal A}_4$ the group ${\mathcal
G}_{36}$ is isomorphic to $ C_2\wr C_3\wr  {\mathcal S}_6 $. It
contains $\pi _1({\mathcal G}) \cong {\mathcal A}_4 \wr {\mathcal
S}_6$ of index $64$.
\\
For ${\mathcal H}\cong D_8$ we get ${\mathcal G}_{36} =
\pi_1({\mathcal G}) \cong C_2\wr C_2 \wr {\mathcal S}_9$.
\end{remark}

\begin{lemma}\label{repr}
A set of representatives of the ${\mathcal G}_{36}$ orbits on
${\mathcal D}_k$ can be computed by performing the following
computations:
\begin{itemize}
\item Let $h_1,\ldots , h_s$ represent the
conjugacy classes  of fixed point free elements of order $2$ in
$\textnormal{Aut}(Y_k)$.
\item Compute elements $\tau _1,\ldots, \tau  _s \in {\mathcal S}_{36}$ such that
$\tau  _i^{-1} h_i \tau  _i =\pi _1(h) $ and put $D_i := Y_k^{\tau
_i}$ so that
 $\pi_1(h) \in \textnormal{Aut} (D_i)$.
\item For all $D_i$
let $\sigma _1,\ldots , \sigma _{t_i}$ a set of representives of the
action by conjugation  by the subgroup
$\pi_3(C_{\textnormal{Aut}(D_i)}(\pi_1(h)))$ on fixed point free
elements of order $3$ (for ${\mathcal H} \cong {\mathcal A}_4$)
respectively $2$ (for ${\mathcal H} \cong D_8 $)
 in $\textnormal{Aut} (\pi _3 (D_i(\pi_1(h)) ) ) $.
\item  Compute elements $\rho _1,\ldots \rho  _{t_i} \in {\mathcal S}_{18}$ such that
$\rho  _j^{-1} \sigma_j \rho  _j =\pi _3(\sigma) $, lift $\rho _j$
naturally to a permutation $\tilde{\rho }_j \in {\mathcal S}_{36}$
commuting with $\pi _1 (h) $ (defined by $\tilde{\rho } _j (2a-1) =
2 \rho_j(a) -1 $,
 $\tilde{\rho _j} (2a) = 2 \rho_j(a)  $)
and put $$D_{i,j} := (D_i)^{\tilde{\rho } _j} = Y_k^{\tau _i
\tilde{\rho }_j} $$ so that
 $\pi_3(\sigma ) \in \textnormal{Aut}(\pi _2(D_{i,j}(\pi_1(h))))$.
\end{itemize}
Then $\{ D_{i,j} \mid 1\leq i \leq s, 1\leq j \leq t_i \} $
represent the ${\mathcal G}_{36}$-orbits on ${\mathcal D}_k$.
\end{lemma}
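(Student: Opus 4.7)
The plan is to prove correctness of the recipe by a two-stage orbit-stabilizer decomposition. Stage~1 partitions ${\mathcal D}_k$ into orbits under the larger group $C_{{\mathcal S}_{36}}(\pi_1(h))$ (producing the $D_i$), and Stage~2 refines each such orbit into ${\mathcal G}_{36}$-orbits (producing the $D_{i,j}$ from each $D_i$). The two stages are then assembled into the claimed set of representatives.

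For Stage~1, write any $D\in{\mathcal D}_k$ as $D=Y_k^{\tau}$ with $\tau\in{\mathcal S}_{36}$. The condition $\pi_1(h)\in\textnormal{Aut}(D)$ translates to the fixed-point-free involution $h':=\tau\pi_1(h)\tau^{-1}$ lying in $\textnormal{Aut}(Y_k)$. A short calculation (absorbing $\textnormal{Aut}(Y_k)$ on the left and $C_{{\mathcal S}_{36}}(\pi_1(h))$ on the right) shows that $Y_k^\tau$ and $Y_k^{\tau'}$ are $C_{{\mathcal S}_{36}}(\pi_1(h))$-equivalent iff the corresponding $h'$ and $h''$ are $\textnormal{Aut}(Y_k)$-conjugate. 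Thus a choice of conjugacy class representatives $h_1,\dots,h_s$ of fixed-point-free involutions in $\textnormal{Aut}(Y_k)$, together with $\tau_i\in{\mathcal S}_{36}$ satisfying $\tau_i^{-1}h_i\tau_i=\pi_1(h)$ (which exist because all fixed-point-free involutions in ${\mathcal S}_{36}$ are ${\mathcal S}_{36}$-conjugate), yields representatives $D_i=Y_k^{\tau_i}$.

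For Stage~2, fix $D_i$ and consider $D=D_i^{\beta}$ with $\beta\in C_{{\mathcal S}_{36}}(\pi_1(h))$. Since $\pi_3$ is a group epimorphism and $\beta$ centralizes $\pi_1(h)$, one obtains $\pi_3(D(\pi_1(h)))=\pi_3(D_i(\pi_1(h)))^{\pi_3(\beta)}$, so the ``admissibility'' condition becomes $\sigma':=\pi_3(\beta)\,\pi_2(\sigma)\,\pi_3(\beta)^{-1}\in\textnormal{Aut}(\pi_3(D_i(\pi_1(h))))$, a fixed-point-free element of order $3$ (resp.\ $2$). This assignment $D\mapsto\sigma'$ is well-defined up to conjugation by the projected stabilizer $\pi_3(C_{\textnormal{Aut}(D_i)}(\pi_1(h)))$, which coincides with the image in ${\mathcal S}_{18}$ of the $C_{{\mathcal S}_{36}}(\pi_1(h))$-stabilizer of $D_i$. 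Replacing ``$C_{{\mathcal S}_{36}}(\pi_1(h))$-conjugation'' by ``${\mathcal G}_{36}$-conjugation'' on the left corresponds, by definition of ${\mathcal G}_{36}$, to requiring $\pi_3(\beta)\in C_{{\mathcal S}_{18}}(\pi_2(\sigma))$ on the right. Hence choosing orbit representatives $\sigma_j$ and conjugating elements $\rho_j^{-1}\sigma_j\rho_j=\pi_2(\sigma)$, lifted to $\tilde\rho_j\in C_{{\mathcal S}_{36}}(\pi_1(h))$ via the surjection $\pi_3$, gives a bijection with ${\mathcal G}_{36}$-orbits inside the $C_{{\mathcal S}_{36}}(\pi_1(h))$-orbit of $D_i$; the representatives are $D_{i,j}=D_i^{\tilde\rho_j}$.

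The hardest part is not any single verification but the bookkeeping: ensuring that the two-stage process is compatible, i.e.\ that the Stage-2 stabilizer action factors through $\pi_3$ in exactly the manner claimed, and that the lift $\tilde\rho_j$ (defined on pairs $(2a-1,2a)$) really commutes with $\pi_1(h)$. Everything else (conjugacy of fixed-point-free involutions/triples in the symmetric groups, surjectivity of $\pi_3$) is a routine cycle-type check.
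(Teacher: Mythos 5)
Your proposal is correct and is essentially the paper's own argument repackaged: the paper proves pairwise ${\mathcal G}_{36}$-inequivalence of the $D_{i,j}$ and completeness directly, and the manipulations it uses (extracting an element of $\textnormal{Aut}(Y_k)$ conjugating $h_i$ to $h_{i'}$, then an element of $C_{\textnormal{Aut}(D_i)}(\pi_1(h))$ whose $\pi_3$-image conjugates $\sigma_j$ to $\sigma_{j'}$, and conversely normalizing an arbitrary $D$ in two steps) are exactly your Stage-1/Stage-2 double-coset correspondences. So the content matches; your bijective two-stage formulation is just a cleaner way of organizing the same verification.
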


\begin{proof}
Clearly these codes lie in ${\mathcal D}_k$. \\
Now assume that there is some $\tau \in {\mathcal G}_{36}$ such that
$$Y_k^{\tau _{i'} \tilde{\rho }_{j'} \tau } = D_{i',j'} ^{\tau } = D_{i,j}
= Y_k^{\tau _{i} \tilde{\rho }_{j}  }.$$ Then
$$\epsilon :=  \tau _{i'} \tilde{\rho }_{j'} \tau \tilde{\rho }_{j} ^{-1} \tau _{i} ^{-1}
\in \textnormal{Aut} (Y_k) $$ satisfies $\epsilon h_i \epsilon ^{-1}
= h_{i'} $, so
 $h_i$ and $h_{i'}$ are conjugate in $\textnormal{Aut} (Y_k)$, which implies $i=i'$
 (and so $\tau_i=\tau_{i'}$). Now,
$$Y_k^{\tau _{i} \tilde{\rho }_{j'} \tau } =D_{i}^{\tilde{\rho }_{j'}
\tau  }= D_i^{\tilde{\rho }_{j}  } = Y_k^{\tau _{i} \tilde{\rho
}_{j} }.$$ Then
$$\epsilon' :=  \tilde{\rho }_{j'} \tau \tilde{\rho }_{j} ^{-1}
\in \textnormal{Aut} (D_i) $$ commutes with $\pi _1(h)$. We compute
that $\pi_3(\epsilon') \sigma_j \pi_3({\epsilon'}^{-1}) =
\sigma_{j'} $ and hence  $j=j'$.

Now let $D \in {\mathcal D}_k$ and choose some $\xi \in {\mathcal
S}_{36}$ such that $D^{\xi } = Y_k $. Then $\pi_1(h) ^{\xi } $ is
conjugate to some of the chosen representatives $h_i \in
\textnormal{Aut}(Y_k)$ ($i=1,\ldots ,s$) and we may multiply $\xi $
by some automorphism of $Y_k$ so that $\pi _1(h) ^{\xi } = h_i =
\pi_1(h)^{\tau _i^{-1}} $. So $\xi \tau_i \in C_{{\mathcal S}_{36}}
(\pi_1(h))$ and $D ^{\xi \tau _i } = Y_k ^{\tau _i}=D_i $. Since
$\pi_3(\sigma ) \in \textnormal{Aut}(\pi_3(D(\pi _1(h))) ) $ we get
$$\pi _3(\sigma )^{\pi _3(\xi \tau _i)} \in \textnormal{Aut} (\pi _3(D_i (\pi _1(h)) ))  $$
and so there is some automorphism $\alpha \in
\pi_3(C_{\textnormal{Aut}(D_i)}(\pi_1(h)))$ and some $j \in
\{1,\ldots , t_i \}$ such that $(\pi _3(\sigma )^{\pi _3(\xi \tau
_i)} ) ^{\alpha } = \sigma _j$. Then
$$D^{\xi \tau_i \tilde{\alpha } \tilde{\rho }_j}=D_{i,j} $$ where
$\xi \tau_i \tilde{\alpha } \tilde{\rho }_j \in {\mathcal G}_{36}$.
\end{proof}

\subsection{The computations for ${\mathcal A}_4$.}

We now deal with the case ${\mathcal H}\cong {\mathcal A}_4$.

\begin{remark}\label{computedCodes}
With {\sc Magma} we use the algorithm given in Lemma \ref{repr} to
compute  that there are exactly $25,299$ ${\mathcal G}_{36}$-orbits
on ${\mathcal D}$, represented by, say, $X_1,\ldots , X_{25,299}$.
\end{remark}

As ${\mathcal G}$ is the centraliser of ${\mathcal A}_4$ in
${\mathcal S}_{72}$ the image $\pi _1({\mathcal G})$ commutes with
$\pi_1(h)$ and $\pi _2({\mathcal G})$ centralizes $\pi_2(\sigma )$.
In particular the group ${\mathcal G}_{36}$ contains $\pi
_1({\mathcal G})$ as a subgroup. With  {\sc Magma} we compute that
$[{\mathcal G}_{36} : \pi _1({\mathcal G})] = 64$. Let $g_1,\ldots ,
g_{64} \in {\mathcal G}_{36} $ be a left transversal of $ \pi
_1({\mathcal G})$ in ${\mathcal G}_{36} $ .

\begin{remark}
The set $\{ X_i ^{g_j} \mid 1\leq i\leq 25,299, 1\leq j \leq 64 \} $
contains a set of representatives of the $\pi _1({\mathcal
G})$-orbits on ${\mathcal D}$.
\end{remark}

\begin{remark}
For all $1\leq i \leq 25,299, 1\leq j\leq 64 $ we compute the code
$$ {\mathcal E} := E(X_i^{g_j},\sigma ) := \tilde{D} + \tilde{D}^{\sigma } + \tilde{D}^{\sigma ^2} ,
\mbox{ where } \tilde{D} = \pi_1^{-1} (X_i^{g_j}). $$ For three
$X_i$ there are two codes $\tilde{D}_{i,1} = \pi _1^{-1} (
X_i^{g_{j_1}}) $ and $\tilde{D}_{i,2} = \pi _1^{-1} ( X_i^{g_{j_2}})
$ such that $E(X_i^{g_{j_1}},\sigma )$ and $E(X_i^{g_{j_2}},\sigma
)$ are doubly even and of minimum distance $16$. In all three cases,
the two codes are equivalent. Let us call the inequivalent codes
${\mathcal E}_1, {\mathcal E}_2$ and ${\mathcal E}_3$, respectively.
They have dimension $26$, $26$, and $25$, respectively, minimum
distance $16$ and their automorphism groups are
$$\textnormal{Aut}({\mathcal E}_1) \cong {\mathcal S}_4,
\textnormal{Aut}({\mathcal E}_2) \mbox{ of order } 432,
\textnormal{Aut}({\mathcal E}_3) \cong ({\mathcal A}_4\times
{\mathcal A}_5):2.$$ All three groups contain a unique conjugacy
class of subgroups conjugate in ${\mathcal S}_{72}$ to ${\mathcal
A}_4$ (which is normal for ${\mathcal E}_1$ and ${\mathcal E}_3$).
\end{remark}

These computations took about $26$ hours CPU, using an Intel(R)
Xeon(R) CPU X5460 @ 3.16GHz.

\begin{corollary}
The code ${\mathcal C}(g) + {\mathcal C}(h) + {\mathcal C}(gh) $ is
equivalent under the action of ${\mathcal G}$ to one of the three
codes ${\mathcal E}_1, {\mathcal E}_2$ or ${\mathcal E}_3$.
\end{corollary}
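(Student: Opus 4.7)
The plan is to identify $\mathcal{C}(g)+\mathcal{C}(h)+\mathcal{C}(gh)$ with one of the codes $E(X_i^{g_j},\sigma)$ considered in the preceding remark, and then read off the conclusion from the classification already carried out there.

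First I would observe that $\sigma$, as an element of order $3$ in $\mathcal{A}_4\cong\mathcal{V}_4:C_3$, acts by conjugation on $\mathcal{V}_4=\langle g,h\rangle$ by cycling its three involutions $g,h,gh$. Consequently $\mathcal{C}(g)^{\sigma}=\mathcal{C}(g^{\sigma})$, and similarly for $\sigma^2$, so (after possibly relabelling $h$ and $gh$)
$$\mathcal{C}(g)+\mathcal{C}(h)+\mathcal{C}(gh) \;=\; \tilde{D}+\tilde{D}^{\sigma}+\tilde{D}^{\sigma^2} \;=\; E(\pi_1(\mathcal{C}(g)),\sigma),$$
where $\tilde{D}:=\mathcal{C}(g)=\pi_1^{-1}(\pi_1(\mathcal{C}(g)))$ by the very definition given in the preceding remark.

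Next I would verify that $\pi_1(\mathcal{C}(g))\in\mathcal{D}$: it is self-dual of length $36$ and minimum distance $8$ by \cite{Neven}; it is $\pi_1(h)$-invariant because $h$ centralises $g$ and preserves $\mathcal{C}$; and $\pi_2(\sigma)$ preserves $\pi_3(\pi_1(\mathcal{C}(g))(\pi_1(h)))$ because $\sigma$ normalises $\langle g,h\rangle$ and stabilises $\mathcal{C}$. Combining Lemma \ref{repr} over the $41$ isomorphism types $Y_k$, the codes $X_1,\ldots,X_{25{,}299}$ form a complete set of $\mathcal{G}_{36}$-orbit representatives on $\mathcal{D}$. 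Applying the transversal $g_1,\ldots,g_{64}$ of $\pi_1(\mathcal{G})$ in $\mathcal{G}_{36}$ then gives $\pi_1(\mathcal{G})$-orbit representatives $\{X_i^{g_j}\}$, so there exist indices $i,j$ and an element $\gamma\in\mathcal{G}$ with $\mathcal{C}(g)^{\gamma}=\pi_1^{-1}(X_i^{g_j})$.

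Finally, since every $\gamma\in\mathcal{G}$ centralises $\sigma$ by definition of $\mathcal{G}$, the construction $E(-,\sigma)$ is $\mathcal{G}$-equivariant, so
$$\bigl(\mathcal{C}(g)+\mathcal{C}(h)+\mathcal{C}(gh)\bigr)^{\gamma} \;=\; E(X_i^{g_j},\sigma).$$
As a subcode of the extremal $[72,36,16]$ code $\mathcal{C}$, the left hand side is doubly even and of minimum distance $\geq 16$; hence so is $E(X_i^{g_j},\sigma)$. The preceding remark enumerates precisely those $(i,j)$ for which $E(X_i^{g_j},\sigma)$ has these properties, and shows that up to equivalence the list consists exactly of $\mathcal{E}_1,\mathcal{E}_2,\mathcal{E}_3$. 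The only subtle point in the plan is keeping the three nested group actions ($\mathcal{G}$, $\pi_1(\mathcal{G})$ and $\mathcal{G}_{36}$) straight, but the requisite bookkeeping is already encoded in Lemma \ref{repr} and the choice of transversal.
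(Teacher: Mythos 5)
Your argument is correct and matches the paper's (implicit) reasoning: the corollary is stated there without proof as an immediate consequence of the preceding remarks, namely that $\mathcal{C}(g)+\mathcal{C}(h)+\mathcal{C}(gh)=E(\pi_1(\mathcal{C}(g)),\sigma)$ with $\pi_1(\mathcal{C}(g))\in\mathcal{D}$, that the $X_i^{g_j}$ cover all $\pi_1(\mathcal{G})$-orbits on $\mathcal{D}$, and that the computation singled out exactly $\mathcal{E}_1,\mathcal{E}_2,\mathcal{E}_3$ among the doubly even candidates of minimum distance $16$. Your write-up simply makes explicit the equivariance and bookkeeping steps the paper leaves to the reader.
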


Let ${\mathcal E}$ be one of these three codes. The group ${\mathcal
A}_4$ acts on ${\mathcal V}:={\mathcal E}^{\perp } / {\mathcal E} $
with kernel $\langle g,h \rangle $. The space ${\mathcal V}$ is
hence an $\F_2 \langle \sigma \rangle $-module supporting a $\sigma
$-invariant form such that ${\mathcal C} $ is a self-dual submodule
of ${\mathcal V}$. As in Section \ref{prel} we obtain a canonical
decomposition
$${\mathcal V} =  {\mathcal V}(\sigma ) \perp {\mathcal W} $$
where ${\mathcal V}(\sigma )$  is the fixed space of $\sigma $ and
$\sigma $ acts as a primitive third root of unity on ${\mathcal W}$.

For ${\mathcal E} = {\mathcal E}_1$ or ${\mathcal E} = {\mathcal
E}_2$ we compute that ${\mathcal V}(\sigma ) \cong \F_2^4$ and
${\mathcal W} \cong \F_4^{8}$. For both codes the full preimage of
every self-dual submodule of
 ${\mathcal V}(\sigma )$ is a code of minimum distance $<16$.

For ${\mathcal E} = {\mathcal E}_3$ the dimension of ${\mathcal
V}(\sigma )$ is $2$ and there is a unique self-dual submodule of
${\mathcal V}(\sigma )$ so that the full preimage $E_3$ is
doubly-even and of minimum distance $\geq 16$. The element $\sigma $
acts on $E_3^{\perp}/E_3 \cong {\mathcal W}$ with irreducible
minimal polynomial, so $E_3^{\perp } / E_3 \cong \F_4^{10}$. The
code ${\mathcal C}$ is a preimage of one of the $58,963,707$ maximal
isotropic $\F_4$-subspaces of the Hermitian $\F_4$-space $E_3^{\perp
}/ E_3$.

The unitary group $GU(10,2)$ of $E_3^{\perp}/E_3 \cong \F_4^{10}$
acts transitively on the maximal isotropic subspaces. So a quite convenient way to enumerate
all these spaces is to compute an isometry of $E_3^{\perp}/E_3$
with the standard model used in {\sc Magma} and then compute the
$GU(10,2)$-orbit of one maximal isotropic space (e.g. the one spanned by
the first 5 basis vectors in the standard model).
The problem here is that the orbit becomes too long to be stored in the
available memory (4GB). So
we first
compute all $142,855$ one dimensional isotropic subspaces
$\overline{E}_3/{E}_3 \leq _{\F_4} {E}_3^{\perp }/{E}_3 $ for which
the code $\overline{E}_3 $ has minimum distance $\geq 16$. The
automorphism group $\textnormal{Aut}(E_3) =
\textnormal{Aut}({\mathcal E}_3)$ acts on these codes with  $1,264$
orbits. For all these $1,264$ orbit representatives
${\overline{E}}_3$ we compute the $114,939$ maximal isotropic
subspaces of $\overline{E}_3^{\perp }/\overline{E}_3 $ (as the
orbits of one given subspace under the unitary group $GU(8,2)$ in
{\sc Magma}) and check whether the corresponding doubly-even
self-dual code has minimum distance $16$. No such code is found.

Note that the latter computation can be parallelised easily as all
$1,264$ computations are independent of each other. We split it into
$10$ jobs. To deal with $120$ representatives $\overline{E}_3$ took
between $5$ and $10$ hours on a Core i7 870  (2.93GHz) personal
computer.

This computation shows the following.

\begin{theorem}
The automorphism group of a self-dual $[72,36,16]$ code does not
contain  a subgroup isomorphic to $\mathcal{A}_4$.
\end{theorem}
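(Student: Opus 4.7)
The plan is to assume for contradiction that $\mathcal{C}$ is an extremal self-dual $[72,36,16]$ code with $\mathcal{H} \cong \mathcal{A}_4 \leq \textnormal{Aut}(\mathcal{C})$, written as $\mathcal{H} = \langle g,h \rangle : \langle \sigma \rangle$. By \cite{Bord2,Bord3} every non-trivial element of $\mathcal{H}$ acts fixed-point freely on $\{1,\ldots,72\}$, so after conjugation $g,h,\sigma$ take the explicit permutation form displayed in the excerpt, and $\pi_1(\mathcal{C}(g))$ is then a self-dual $[36,18,8]$ code on which $\pi_1(h)$ acts and such that $\pi_2(\sigma)$ is an automorphism of the appropriate fixed code; in other words $\pi_1(\mathcal{C}(g)) \in \mathcal{D}$.

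Combining Gaborit's classification of the $41$ extremal $[36,18,8]$ codes with Lemma \ref{repr} yields $25{,}299$ representatives $X_1,\ldots,X_{25{,}299}$ of the $\mathcal{G}_{36}$-orbits on $\mathcal{D}$. Since $\pi_1(\mathcal{G})$ has index $64$ in $\mathcal{G}_{36}$, a left transversal $g_1,\ldots,g_{64}$ refines these to a finite list $\{X_i^{g_j}\}$ of candidates for $\pi_1(\mathcal{C}(g))$ up to the action of the centraliser. For each candidate I would lift to $\tilde{D} := \pi_1^{-1}(X_i^{g_j}) \leq \F_2^{72}$ and form the $\mathcal{A}_4$-invariant subcode $\mathcal{E} := \tilde{D} + \tilde{D}^\sigma + \tilde{D}^{\sigma^2}$, retaining only those $\mathcal{E}$ that are doubly even with minimum distance $\geq 16$. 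A Magma computation is expected to output exactly three such codes $\mathcal{E}_1,\mathcal{E}_2,\mathcal{E}_3$ up to equivalence, of dimensions $26,26,25$, and any putative $\mathcal{C}$ must contain a $\mathcal{G}$-translate of one of them.

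Fix one of the $\mathcal{E}_i$ and study the quotient $\mathcal{V} := \mathcal{E}_i^\perp / \mathcal{E}_i$. Since $\langle g,h \rangle$ acts trivially on $\mathcal{V}$, this is an $\F_2\langle\sigma\rangle$-module carrying the induced $\sigma$-invariant bilinear form, and $\overline{\mathcal{C}} := \mathcal{C}/\mathcal{E}_i$ is a $\sigma$-invariant self-dual submodule. A Maschke-type decomposition $\mathcal{V} = \mathcal{V}(\sigma) \perp \mathcal{W}$, with $\sigma$ acting as a primitive cube root of unity on $\mathcal{W}$, makes $\mathcal{W}$ a Hermitian $\F_4$-space and forces $\overline{\mathcal{C}} \cap \mathcal{V}(\sigma)$ to be a self-dual submodule of $\mathcal{V}(\sigma)$ whose full preimage lies inside $\mathcal{C}$. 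For $\mathcal{E}_1$ and $\mathcal{E}_2$ we get $\dim_{\F_2}\mathcal{V}(\sigma) = 4$, and a short enumeration of the self-dual submodules of $\mathcal{V}(\sigma)$ shows that every such preimage already has minimum distance $<16$, eliminating these cases immediately.

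The main obstacle is $\mathcal{E}_3$. Here $\dim_{\F_2}\mathcal{V}(\sigma) = 2$ forces a unique self-dual submodule with preimage $E_3$, and $E_3^\perp/E_3 \cong \F_4^{10}$; thus $\mathcal{C}$ corresponds to one of the $58{,}963{,}707$ maximal isotropic $\F_4$-subspaces of this Hermitian space, a list too large to store as a single $GU(10,2)$-orbit within a $4$~GB memory budget. I would therefore proceed in two stages: first enumerate the $142{,}855$ isotropic lines $\overline{E}_3/E_3 \leq E_3^\perp/E_3$ for which the preimage $\overline{E}_3$ still has minimum distance $\geq 16$; reduce these to $1{,}264$ orbits under $\textnormal{Aut}(E_3) = \textnormal{Aut}(\mathcal{E}_3)$; and for each orbit representative enumerate the $114{,}939$ maximal isotropic subspaces of $\overline{E}_3^\perp/\overline{E}_3$ as a $GU(8,2)$-orbit, testing in each case whether the resulting doubly-even self-dual code attains minimum distance $16$. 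Since these $1{,}264$ sub-problems are independent they can be parallelised across several processes; verifying that none of the resulting self-dual codes has minimum distance $16$ completes the contradiction.
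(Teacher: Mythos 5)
Your proposal is correct and follows essentially the same route as the paper: fixed-point-free action, Gaborit's classification with Lemma \ref{repr} giving the $25{,}299$ orbit representatives and the index-$64$ transversal, the three codes ${\mathcal E}_1,{\mathcal E}_2,{\mathcal E}_3$, the decomposition ${\mathcal V}={\mathcal V}(\sigma)\perp{\mathcal W}$, and for ${\mathcal E}_3$ the same two-stage enumeration via the $1{,}264$ isotropic-line orbits and $GU(8,2)$-orbits of maximal isotropic subspaces. No substantive differences from the paper's argument.
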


\subsection{The computations for $D_8$.}

For this section we assume that ${\mathcal H}\cong D_8$. Then $\pi
_1({\mathcal G}) = {\mathcal G}_{36}$ and we may use Lemma
\ref{repr} to compute a system of representatives of the
$\pi_1({\mathcal G}) -$orbits on the set ${\mathcal D}$.

\begin{remark}\label{computedCodesD8}
$\pi_1({\mathcal G}) $ acts on ${\mathcal D}$ with exactly $9,590$
orbits represented by, say, $X_1,\ldots ,$ $X_{9,590}$. For all
$1\leq i \leq 9,590 $ we compute the code
$$ {\mathcal E} := E(X_i,\sigma ) := \tilde{D} + \tilde{D}^{\sigma },
\mbox{ where } \tilde{D} = \pi_1^{-1} (X_i). $$ For four $X_i$ the
code $E(X_i,\sigma )$ is doubly even and of minimum distance $16$.
Let us call the inequivalent codes ${\mathcal E}_1, {\mathcal E}_2,
{\mathcal E}_3$ and ${\mathcal E}_4$, respectively. All have
dimension $26$ and minimum distance $16$.
\end{remark}

\begin{corollary}
The code ${\mathcal C}(g) + {\mathcal C}(h) + {\mathcal C}(gh) $ is
equivalent under the action of ${\mathcal G}$ to one of the four
codes ${\mathcal E}_1, {\mathcal E}_2, {\mathcal E}_3$ or ${\mathcal
E}_4$.
\end{corollary}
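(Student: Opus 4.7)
The plan is to reduce $\mathcal{C}(g)+\mathcal{C}(h)+\mathcal{C}(gh)$ to the computer-generated list of candidates produced in Remark \ref{computedCodesD8}. The first step is to observe that the conjugation action of $\sigma$ on $\mathcal{V}_4=\langle g,h\rangle$ has order $2$, so it fixes exactly one of the three involutions and swaps the other two; a direct check on the cycle notation given for $g,h,\sigma$ shows that $\sigma$ fixes $h$ and satisfies $\sigma g\sigma^{-1}=gh$, hence $\mathcal{C}(g)^{\sigma}=\mathcal{C}(gh)$. Because $\mathcal{V}_4$ acts freely on $\{1,\ldots,72\}$ (no non-trivial element has fixed points), restricting to any one of its $18$ regular orbits of size $4$ shows that the three $2$-dimensional fixed subspaces $\mathcal{C}(g)$, $\mathcal{C}(h)$, $\mathcal{C}(gh)$ together span the $3$-dimensional space of $\mathcal{V}_4$-even vectors, so any two of them contain the third. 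In particular
$$\mathcal{C}(g)+\mathcal{C}(h)+\mathcal{C}(gh)\;=\;\mathcal{C}(g)+\mathcal{C}(g)^{\sigma}.$$

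Next, by \cite{Neven} the code $\pi_1(\mathcal{C}(g))$ is a self-dual $[36,18,8]$ code; it is $\pi_1(h)$-invariant because $g$ and $h$ commute, and $\pi_2(\sigma)$ acts on $\pi_3(\pi_1(\mathcal{C}(g))(\pi_1(h)))$ because $\sigma$ normalizes $\langle g,h\rangle$, so $\pi_1(\mathcal{C}(g))\in\mathcal{D}$. By Lemma \ref{repr} together with Remark \ref{computedCodesD8} there exist $i\in\{1,\ldots,9590\}$ and $t\in\mathcal{G}$ such that $\pi_1(\mathcal{C}(g))^{\pi_1(t)}=X_i$, equivalently $\mathcal{C}(g)^t=\tilde{D}:=\pi_1^{-1}(X_i)$. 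Since $t\in\mathcal{G}=C_{\mathcal{S}_{72}}(\mathcal{H})$ commutes with $\sigma$, conjugation by $t$ respects the $\sigma$-action, giving
$$\bigl(\mathcal{C}(g)+\mathcal{C}(g)^{\sigma}\bigr)^{t}\;=\;\tilde{D}+\tilde{D}^{\sigma}\;=\;E(X_i,\sigma).$$

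Finally, as $\mathcal{C}$ is doubly-even of minimum distance $16$ and contains $\mathcal{C}(g)+\mathcal{C}(h)+\mathcal{C}(gh)$, its image $E(X_i,\sigma)$ under $t$ is doubly-even of minimum distance at least $16$. By Remark \ref{computedCodesD8} only four values of $i$ produce such a code, yielding $\mathcal{E}_1,\mathcal{E}_2,\mathcal{E}_3,\mathcal{E}_4$, so $\mathcal{C}(g)+\mathcal{C}(h)+\mathcal{C}(gh)$ is $\mathcal{G}$-equivalent to one of them. The only conceptually non-routine step is the reduction to the two-summand expression $\mathcal{C}(g)+\mathcal{C}(g)^{\sigma}$; everything after that is a direct use of the enumeration in the preceding remark, made clean by the coincidence $\mathcal{G}_{36}=\pi_1(\mathcal{G})$ noted for the $D_8$ case, which means no further outer lift of $\mathcal{G}$ has to be handled (in contrast to the $\mathcal{A}_4$ case where a transversal of index $64$ was needed).
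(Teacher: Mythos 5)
Your overall reduction (pass to $\pi_1(\mathcal{C}(g))\in\mathcal{D}$, move it onto an orbit representative $X_i$ by some $t\in\mathcal{G}$ using $\mathcal{G}_{36}=\pi_1(\mathcal{G})$, note that $t$ commutes with $\sigma$, and invoke Remark \ref{computedCodesD8}) is exactly the route the paper intends, and that part is fine. The problem is the step you yourself single out as the non-routine one: the identity $\mathcal{C}(g)+\mathcal{C}(h)+\mathcal{C}(gh)=\mathcal{C}(g)+\mathcal{C}(g)^{\sigma}$, i.e.\ the containment $\mathcal{C}(h)\subseteq \mathcal{C}(g)+\mathcal{C}(gh)$. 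Your justification is an ambient-space argument: on each regular $\mathcal{V}_4$-orbit the fixed spaces of two involutions span the even-weight space, which indeed gives $\mathrm{Fix}_{\F_2^{72}}(h)\subseteq \mathrm{Fix}_{\F_2^{72}}(g)+\mathrm{Fix}_{\F_2^{72}}(gh)$. But $\mathcal{C}(h)=\mathcal{C}\cap\mathrm{Fix}(h)$, and intersecting with $\mathcal{C}$ does not commute with taking sums, so the containment of ambient fixed spaces does not yield the containment of fixed codes. The inference genuinely fails under the hypotheses you use (free $\mathcal{V}_4$-action on coordinates, $\mathcal{C}$ self-dual and invariant): take $\mathcal{C}=\{0000,1010,0101,1111\}\leq \F_2^4$ with $g=(1,2)(3,4)$ and $h=(1,3)(2,4)$; then $\mathcal{C}(h)=\mathcal{C}$ has dimension $2$, while $\mathcal{C}(g)+\mathcal{C}(gh)=\{0000,1111\}$, so $\mathcal{C}(g)+\mathcal{C}(h)+\mathcal{C}(gh)\neq \mathcal{C}(g)+\mathcal{C}(gh)$.

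The containment is true in the actual situation, but it needs the result of \cite{Neven} (which you cite only to place $\pi_1(\mathcal{C}(g))$ in $\mathcal{D}$): for each involution $x\in\{g,h,gh\}$ the projected fixed code is self-dual of length $36$, so $\dim_{\F_2}\mathcal{C}(x)=18$. Since always $(1+x)\mathcal{C}\subseteq\mathcal{C}(x)$ and $\dim(1+x)\mathcal{C}=36-\dim\mathcal{C}(x)=18$, this forces $\mathcal{C}(x)=(1+x)\mathcal{C}$. Now the group-ring identity $1+h=(1+g)+g(1+gh)$ together with $g\mathcal{C}=\mathcal{C}$ gives
$$\mathcal{C}(h)=(1+h)\mathcal{C}\subseteq (1+g)\mathcal{C}+(1+gh)\mathcal{C}=\mathcal{C}(g)+\mathcal{C}(gh)=\mathcal{C}(g)+\mathcal{C}(g)^{\sigma},$$
which is precisely what the toy example above violates (there $\dim\mathcal{C}(h)>2$, so $\mathcal{C}(h)\neq(1+h)\mathcal{C}$). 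With this repair your argument is complete and agrees with the paper's (implicit) proof of the corollary; your identification $h^{\sigma}=h$, $g^{\sigma}=gh$, and the observation that no index-$64$ transversal is needed because $\mathcal{G}_{36}=\pi_1(\mathcal{G})$ in the $D_8$ case, are all correct.
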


This computation is very fast (it is due mainly to the fact that
$\mathcal{G}_{36}=\pi(\mathcal{G})$). It took about $5$ minutes CPU
on an Intel(R) Xeon(R) CPU X5460 @ 3.16GHz.

As it seems to be quite hard to compute all $D_8$-invariant
self-dual overcodes of ${\mathcal E}_i$ for these four codes
${\mathcal E}_i$ we apply a different strategy which is based on the
fact that $h = (g\sigma)^2$ is the square of an element of order
$4$. So let $$k:= g\sigma = (1, 8, 3, 6)(2, 5, 4, 7) \ldots (66, 69,
68, 71)  \in D_8 .$$ By \cite{Neven}, $\mathcal{C}$ is a free
$\F_2\langle k\rangle$-module (of rank $9$). Since $\langle
k\rangle$ is abelian, the module is both left and right; here we use
the right notation. The regular module $\F_2\langle k\rangle$ has a
unique irreducible module, $1$-dimensional, called the socle, that
is $\langle (1+k+k^2+k^3)\rangle$. So $\mathcal{C}$, as a free
$\F_2\langle k\rangle$-module, has socle
$\mathcal{C}(k)=\mathcal{C}\cdot(1+k+k^2+k^3)$. This implies that,
for every basis $b_1,\ldots,b_9$ of $\mathcal{C}(k)$, there exist
$w_1,\ldots,w_9\in {\mathcal C}$ such that
$w_i\cdot(1+k+k^2+k^3)=b_i$ and
$$\mathcal{C}=w_1\cdot \F_2\langle k\rangle \oplus \ldots \oplus w_9 \cdot
\F_2\langle k\rangle.$$
 To get all the possible
overcodes of $\mathcal{E}_i$, we choose a basis of the socle
$\mathcal{E}_i(k)$, say $b_1,\ldots,b_9$, and look at the sets
$$W_{i,j}=\{w+\mathcal{E}_i \in \mathcal{E}_i^\perp/\mathcal{E}_i \ | \ w\cdot(1+k+k^2+k^3)=b_j \ \text{and} \ d(\mathcal{E}_i+w\cdot \F_2\langle
k\rangle)\ge 16\}$$

For every $i$ we have at least one $j$ for which the set $W_{i,j}$
is empty. This computation (of about $4$ minutes CPU on the same
computer) shows the following.

\begin{theorem}
The automorphism group of a self-dual $[72,36,16]$ code does not
contain  a subgroup isomorphic to $D_8$.
\end{theorem}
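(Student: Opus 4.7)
The plan is to reuse the fixed-code-plus-lifting framework from the $\mathcal{A}_4$ case, but replace the final enumeration step with one tailored to $D_8$ via the cyclic subgroup of order $4$. First I would feed the $D_8$-action into Lemma \ref{repr} to produce representatives $X_1,\ldots,X_N$ of the $\pi_1(\mathcal{G})$-orbits on $\mathcal{D}$. In contrast to the $\mathcal{A}_4$ case, here $\mathcal{G}_{36}=\pi_1(\mathcal{G})$, so no extra transversal is needed. For each $X_i$ I form $\tilde{D}_i:=\pi_1^{-1}(X_i)$ and $\mathcal{E}:=\tilde{D}_i+\tilde{D}_i^{\sigma}$, retaining only those $\mathcal{E}$ that are doubly-even of minimum distance $16$. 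This yields a short list of candidate base codes $\mathcal{E}_1,\ldots,\mathcal{E}_m$ of dimension $26$, each of which must equal $\mathcal{C}(g)+\mathcal{C}(h)+\mathcal{C}(gh)$ up to the $\mathcal{G}$-action.

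Second, for each $\mathcal{E}_i$ I must rule out every $D_8$-invariant self-dual overcode of minimum distance $16$. Since a direct enumeration inside $\mathcal{E}_i^{\perp}/\mathcal{E}_i$ would be too expensive, I would exploit the element $k:=g\sigma\in D_8$, which has order $4$ and satisfies $k^2=h$. By \cite{Neven}, $\mathcal{C}$ must be a free $\F_2\langle k\rangle$-module of rank $9$, and the socle of such a module coincides with $\mathcal{C}(k)=\mathcal{C}\cdot N$, where $N:=1+k+k^2+k^3$. Because $\mathcal{C}(k)\subseteq\mathcal{C}(h)\subseteq\mathcal{E}_i$, and conversely $\mathcal{E}_i(k)\subseteq\mathcal{C}(k)$, the socle equals $\mathcal{E}_i(k)$; in particular it is $9$-dimensional and can be read off from $\mathcal{E}_i$ alone.

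Third, I would fix an $\F_2$-basis $b_1,\ldots,b_9$ of $\mathcal{E}_i(k)$. Freeness forces generators $w_1,\ldots,w_9\in\mathcal{C}$ with $w_j\cdot N=b_j$, and each cyclic piece $w_j\cdot\F_2\langle k\rangle$ is a free summand sitting inside $\mathcal{C}$. Hence every code $\mathcal{E}_i+w_j\cdot\F_2\langle k\rangle$ must still have minimum distance $\geq 16$, which yields the necessary condition that, for every $j$, the set
\[ W_{i,j}:=\{\,w+\mathcal{E}_i\in\mathcal{E}_i^{\perp}/\mathcal{E}_i \ \mid \ w\cdot N=b_j \text{ and } d(\mathcal{E}_i+w\cdot\F_2\langle k\rangle)\geq 16\,\} \]
be non-empty. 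Verifying, for each of the candidate codes $\mathcal{E}_1,\ldots,\mathcal{E}_m$, that there exists at least one index $j$ with $W_{i,j}=\emptyset$ rules out the existence of any extremal free $\F_2\langle k\rangle$-module extension of $\mathcal{E}_i$, hence of $\mathcal{C}$ itself.

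The main obstacle is that the one-generator condition is only necessary, not sufficient: if some $W_{i,j}$ were always non-empty we would be forced to search over compatible $9$-tuples $(w_1,\ldots,w_9)$ and attempt to assemble self-dual overcodes from them, which is combinatorially heavy and would probably require further structural reductions (for instance using the $\sigma$-action on the quotient $\mathcal{E}_i^{\perp}/\mathcal{E}_i$). The expectation, justified by the analogous behaviour for smaller subgroups, is that already the socle-element-by-socle-element test fails on every $\mathcal{E}_i$, making the computation finish in a handful of minutes.
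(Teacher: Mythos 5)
Your proposal is correct and follows essentially the same route as the paper: Lemma \ref{repr} with $\mathcal{G}_{36}=\pi_1(\mathcal{G})$ to get the candidate codes $\mathcal{E}_i=\tilde{D}+\tilde{D}^{\sigma}$ of dimension $26$, then the order-$4$ element $k=g\sigma$, freeness of $\mathcal{C}$ as an $\F_2\langle k\rangle$-module of rank $9$, and the socle-element test via the sets $W_{i,j}$, concluding from the emptiness of some $W_{i,j}$ for each $i$. You even make explicit the identification $\mathcal{E}_i(k)=\mathcal{C}(k)$ that the paper leaves implicit, and your caveat about the test being only necessary is harmless since the computation (as in the paper) finds an empty $W_{i,j}$ for every candidate.
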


\section*{Acknowledgment}

The authors like to express their gratitude to A. Previtali for the
fruitful discussions in Milan. They thank \emph{Laboratorio di
Matematica Industriale e Crittografia} of Trento for providing the
computational infrastructure including the computer algebra system {\sc Magma}.

\medskip
Received xxxx 20xx; revised xxxx 20xx.
\medskip


\begin{thebibliography}{99}


\bibitem{Gaborit}
\newblock C. Aguilar Melchor and P.  Gaborit,
\newblock \emph {On the classification of extremal $[36,18,8]$ binary self-dual codes},
\newblock IEEE Trans. Inform. Theory  54  (2008) 4743-4750.

\bibitem{AMdes}
\newblock  E. F. Assmuss and H.F. Mattson,
\newblock \emph{New $5$-designs},
\newblock  J. Combin. Theory 6 (1969) 122--151.

\bibitem{Baut6}
\newblock  M. Borello,
\newblock\emph{The automorphism group of a self-dual $[72,36,16]$ binary code does not contain elements of order $6$},
\newblock IEEE Trans. Inform. Theory 58, No. 12 (2012), 7240--7245.

\bibitem{Magma}
\newblock    W. Bosma, J. Cannon, C. Playoust,
\newblock \emph{The \textsc{Magma} algebra system \textsc{I}: The user language},
\newblock J. Symbol. Comput. 24 (1997) 235--265.

\bibitem{Bord3}
\newblock  S. Bouyuklieva,
\newblock \emph{On the automorphism group of a doubly even $(72,36,16)$ code},
\newblock  IEEE Trans. Inform. Theory 50 (2004) 544--547.

\bibitem{Bord2}
\newblock  S. Bouyuklieva,
\newblock \emph{On the automorphisms of order $2$ with fixed points for the extremal self-dual codes of length $24m$},
\newblock Des. Codes Cryptogr. 25 (2002) 5--13.

\bibitem{DPadd}
\newblock L.E. Danielsen and M.G. Parker,
\newblock \emph{On the Classification of All Self-Dual Additive Codes over GF(4) of length up to 12},
\newblock Journal of Combinatorial Theory, Series A 112(7), pp. 1351--1367, October 2006.

\bibitem{FeulNe}
\newblock T. Feulner and G. Nebe,
\newblock \emph{The automorphism group of an extremal $[72,36,16]$ code does not contain $Z_7$, $Z_3\times Z_3$, or $D_{10}$.}
\newblock  IEEE Trans. Inform. Theory  58 (11) 2012, 6916--6924.

\bibitem{Hodd}
\newblock    W.C. Huffman,
\newblock \emph{Automorphisms of codes with application to extremal doubly even codes of length $48$},
\newblock   IEEE Trans. Inform. Theory IT-28 (1982) 511--521.

\bibitem{MSmindis}
\newblock  C.L. Mallows and N.J.A. Sloane,
\newblock \emph{An upper bound for self-dual codes},
\newblock  Information and Control 22 (1973) 188--200.

\bibitem{Neven}
\newblock  G. Nebe,
\newblock \emph{An extremal $[72,36,16]$ binary code has no automorphism group containing $Z_2\times Z_4$, $Q_8$, or $Z_{10}$},
\newblock Finite Fields and their applications 18 (2012) 563--566.

\bibitem{Rshad}
\newblock  E.M. Rains,
\newblock \emph{Shadow bounds for self-dual codes},
\newblock IEEE Trans. Inform. Theory 44 (1998), 134--139.

\bibitem{S}
\newblock N.J.A. Sloane,
\newblock \emph{Is there a $(72; 36)$ $d = 16$ self-dual code?},
\newblock  IEEE Trans. Inform. Theory 2 (1973) 251.

\end{thebibliography}
\end{document}